\documentclass[12pt]{article}
\usepackage{amsmath,amssymb,amsthm,enumerate}
\usepackage[dvipdfmx]{hyperref}
\usepackage{natbib}

\def\C{{\mathcal{C}}}
\def\E{{\mathcal{E}}}

\def\R{{\mathbb{R}}}

\def\U{{\mathcal{U}}}

\theoremstyle{plain}
\newtheorem{thm}{Theorem}[section]
\newtheorem{lem}{Lemma}[section]

\newtheorem{cor}{Corollary}[section]

\theoremstyle{definition}
\newtheorem{dfn}{Definition}[section]

\newtheorem{rem}{Remark}[section]

\numberwithin{equation}{section}
\allowdisplaybreaks

\hyphenation{economies e-con-o-mies}
\hyphenation{fuzzy fuzz-y}
\hyphenation{Graziano Gra-zia-no}
\hyphenation{Lebesgue Le-besgue}

\title{Fuzzy Core Equivalence in Large Economies: A Role for the Infinite-\hspace{0pt}Dimensional Lyapunov Theorem\thanks{A preliminary version of this manuscript is an outgrowth of Sagara's visit at Johns Hopkins University during 2012--2014. This research is supported by JSPS KAKENHI Grant Number JP18K01518 from the Ministry of Education, Culture, Sports, Science and Technology, Japan.}}

\date{\today}
\author{M. Ali Khan\thanks{Corresponding author.}
\\
{\small Department of Economics, The Johns Hopkins University} \\[-2pt]
{\small Baltimore, MD 21218, United States} \\
{\small e-mail: akhan@jhu.edu}
\and
\\
Nobusumi Sagara  \\
{\small Department of Economics, Hosei University} \\[-2pt]
{\small 4342, Aihara, Machida, Tokyo 194--0298, Japan} \\
{\small e-mail: nsagara@hosei.ac.jp}}

\begin{document}
\begin{titlepage}
\maketitle

\setcounter{page}{0}
\thispagestyle{empty}
\clearpage

\pagestyle{empty}
\begin{abstract}
We present the equivalence between the fuzzy core and the core under minimal assumptions. Due to the exact version of the Lyapunov convexity theorem in Banach spaces, we clarify that the additional structure of commodity spaces and preferences is unnecessary whenever the measure space of agents is ``saturated''. As a spin-\hspace{0pt}off of the above equivalence, we obtain the coincidence of the core, the fuzzy core, and the Schmeidler's restricted core under minimal assumptions. The coincidence of the fuzzy core and the restricted core has not been articulated anywhere. \\[-6pt]

\noindent
{\bfseries Key words:} large economy; fuzzy core; core; restricted core; infinite-\hspace{0pt}dimensional commodity space; Lyapunov's theorem; saturated measure space. 
\\[-6pt]

\noindent
{\bfseries JEL Classification System:} D51 \\
{\bfseries MSC 2020:} Primary: 91B50; Secondary: 28B05, 28C20, 46G10
\end{abstract}
\end{titlepage}
\tableofcontents

\section{Introduction}
Classical core theory in exchange economies deals with the situation where agents have only one of two alternative possibilities: whether to join or not to join a coalition  -- there is no option for varying degrees of commitment to, and participation in, more than a single one  coalition. On the contrary, fuzzy core theory proposed by \citet{au81,au82} allow for the partial participation of agents in coalitions where the attainable outcomes of the allocations of goods depend on the degree of commitment and participation of the agents, thereby modelling a more pluralistic conception of decision-\hspace{0pt}making and  identity formation as is evident from even the casual observations of group behavior.

In particular, the significant observation Aubin made is the equivalence between the fuzzy core and the set of Walrasian allocations (fuzzy core--Walras equivalence) in finite agent economies with finite-\hspace{0pt}dimensional commodity spaces; see \citet{hu94} for the case of large economies. This is another remarkable coincidence result on the core since the well-\hspace{0pt}known Debreu--\hspace{0pt}Scarf limit theorem for Edgeworth replica economies; see \citet{ds63}. 

Motivated by Aubin's formulation, \citet{no00,bg14} demonstrated the equivalence between the core and fuzzy core in atomless economies. Although the presence of price systems is irrelevant to the direct verification of such equivalence, they need to assume the infinite-\hspace{0pt}dimensional commodity space to be an ordered Banach space whose positive cone has an nonempty interior and preferences of each agent to be strictly monotone. This is a standard assumption to establish the core--\hspace{0pt}Walras equivalence with infinite-dimensional commodity spaces via the use of the separation theorem; see \citet{be73,eh08,ry93}. 

The purpose of this paper is twofold. 

First, we present the equivalence between the fuzzy core and the core under minimal assumptions unlike \citet{no00,bg14}. Due to the exact version of the Lyapunov convexity theorem in Banach spaces established in \citet{ks13}, we clarify that the additional structure of commodity spaces and preferences is unnecessary whenever the measure space of agents is ``saturated'' in the sense of \citet{ks09}. Although \citet{bg14} removed the convexity of preferences from \citet{no00} whenever the measure space of agents is nonatomic, they must impose the requirement mentioned above because of the inevitable use of the approximate version of the Lyapunov convexity theorem; see also \citet{eh08,ry93}.                                          

Second, as a spin-off of the above equivalence, we obtain the coincidence of the core, the fuzzy core, and the restricted core under minimal assumptions. The equivalence between the core and the restricted core in large economies along the lines of \citet{sc72} was extended by \citet{ks13} to Banach commodity spaces under saturation, but the coincidence of the fuzzy core and the restricted core has not been articulated anywhere. As a matter of course, with the aforementioned additional assumption, the classical result on the core--\hspace{0pt}Walras equivalence in large economies along the lines of \citet{au64,hi74} leads to the restricted core--\hspace{0pt}Walras equivalence as well as the coincidence of the fuzzy core and the restricted core in infinite-\hspace{0pt}dimensional commodity space whenever the measure space of agents is nonatomic. 

The organization of the paper is as follows. As preliminaries, Section 2 collects some terminologies on Bochner integrals and vector measures in Banach spaces, provides the definition of saturation of measure spaces, and then presents the infinite-\hspace{0pt}dimensional Lyapunov convexity theorem under saturation. Section 3 explores the main result, the fuzzy core equivalence as is mentioned above.

\section{Preliminaries}
\subsection{Bochner Integrals and Vector Measures}
Let $(T,\Sigma,\mu)$ be a complete finite measure space and $(E,\|\cdot\|)$ be a Banach space. A function $f:T\to E$ is said to be \textit{strongly measurable} if there exists a sequence of simple (or finitely valued) functions $f_n:T\to E$ such that $\|f(t)-f_n(t)\|\to 0$ a.e.\ $t\in T$; $f$ is said to be \textit{Bochner integrable} if it is strongly measurable and $\int_T \|f(t)\|d\mu<\infty$, where the \textit{Bochner integral} of $f$ over $A\in \Sigma$ is defined by $\int_A fd\mu=\lim_n\int_A f_nd\mu$. By the Pettis measurability theorem (see \citet[Theorem II.1.2]{du77}), $f$ is strongly measurable if and only if it is Borel measurable with respect to the norm topology of $E$ whenever $E$ is separable. Denote by $L^1(\mu,E)$ the space of ($\mu$-equivalence classes of) $E$-\hspace{0pt}valued Bochner integrable functions on $T$ such that $\| f(\cdot) \|\in L^1(\mu)$, normed by $\| f \|_1=\int_T \| f(t) \|d\mu$. 

A countably additive set function from $\Sigma$ into $E$ is called a \textit{vector measure}. For a vector measure $m:\Sigma\to E$, a set $N\in \Sigma$ is said to be \textit{$m$-null} if $m(A\cap N)=\mathbf{0}$ for every $A\in \Sigma$. A vector measure $m:\Sigma\to E$ is said to be \textit{$\mu$-\hspace{0pt}continuous} (or \textit{absolutely continuous} with respect to $\mu$) if every $\mu$-null set is $m$-null. For a scalar valued simple function $\varphi$ on $T$ with $\varphi=\sum_{i=1}^n\alpha_i\chi_{A_i}$, where $\alpha_1,\dots,\alpha_n$ are nonzero scalars, $A_1,\dots,A_n\in \Sigma$ are pairwise disjoint, and $\chi_{A_i}$ is the indicator function of $A_i$, define $\Psi_m(\varphi):=\sum_{i=1}^n\alpha_im(A_i)$. Then $\Psi_m$ is a continuous linear operator from the space of simple functions of the above form into $E$; see \citet[p.\,6]{du77}. Thus, $\Psi_m$ has a unique continuous extension (still denoted by $\Psi_m$) to $L^\infty(\mu)$. Hence, it is legitimate to define the integral of $\varphi\in L^\infty(\mu)$ with respect to the vector measure $m$ via $\int_T\varphi(t)d\mu:=\Psi_m(\varphi)$; see \citet[Theorem I.1.13]{du77}. This integral is called the \textit{Bartle integral} of $\varphi$.

\subsection{Lyapunov Convexity Theorem in Banach Spaces}
A finite measure space $(T,\Sigma,\mu)$ is said to be \textit{essentially countably generated} if its $\sigma$-\hspace{0pt}algebra can be generated by a countable number of subsets together with the null sets; $(T,\Sigma,\mu)$ is said to be \textit{essentially uncountably generated} whenever it is not essentially countably generated. Let $\Sigma_S=\{ A\cap S\mid A\in \Sigma \}$ be the $\sigma$-\hspace{0pt}algebra restricted to $S\in \Sigma$. Denote by $L^1_S(\mu)$ the space of $\mu$-\hspace{0pt}integrable functions on the measurable space $(S,\Sigma_S)$ whose elements are restrictions of functions in $L^1(\mu)$ to $S$. An equivalence relation $\sim$ on $\Sigma$ is given by $A\sim B \Longleftrightarrow \mu(A\triangle B)=0$, where $A\triangle B$ is the symmetric difference of $A$ and $B$ in $\Sigma$. The collection of equivalence classes is denoted by $\Sigma(\mu)=\Sigma/\sim$ and its generic element $\widehat{A}$ is the equivalence class of $A\in \Sigma$. We define the metric $\rho$ on $\Sigma(\mu)$ by $\rho(\widehat{A},\widehat{B})=\mu(A\triangle B)$. Then $(\Sigma(\mu),\rho)$ is a complete metric space (see \citet[Lemma 13.13]{ab06}) and $(\Sigma(\mu),\rho)$ is separable if and only if $L^1(\mu)$ is separable; see \citet[Lemma 13.14]{ab06}. The \textit{density} of $(\Sigma(\mu),\rho)$ is the smallest cardinal number of the form $|\U|$, where $\U$ is a dense subset of $\Sigma(\mu)$. 

\begin{dfn}
A finite measure space $(T,\Sigma,\mu)$ is \textit{saturated} if $L^1_S(\mu)$ is nonseparable for every $S\in \Sigma$ with $\mu(S)>0$. 
\end{dfn}

\noindent
The saturation of finite measure spaces is also synonymous with the uncountability of the density of $\Sigma_S(\mu)$ for every $S\in \Sigma$ with $\mu(S)>0$; see \citet[331Y(e) and 365X(p)]{fr12}. Saturation implies nonatomicity; in particular, a finite measure space $(T,\Sigma,\mu)$ is nonatomic if and only if the density of $\Sigma_S(\mu)$ is greater than or equal to $\aleph_0$ for every $S\in \Sigma$ with $\mu(S)>0$. Several equivalent definitions for saturation are known; see \citet{fk02,fr12,hk84,ks09}. One of the simple characterizations of the saturation property is as follows. A finite measure space $(T,\Sigma,\mu)$ is saturated if and only if $(S,\Sigma_S,\mu)$ is essentially uncountably generated for every $S\in \Sigma$ with $\mu(S)>0$. A germinal notion of saturation already appeared in \citet{ka44,ma42}. 

For our purpose, the power of saturation is exemplified in the Lyapunov convexity theorem in infinite dimensions. 

\begin{thm}[\citet{ks13}]
\label{thm1}
Let $(T,\Sigma,\mu)$ be a saturated finite measure space and $E$ be a separable Banach space. If $m:\Sigma\to E$ is a $\mu$-\hspace{0pt}continuous vector measure, then $m$ has weakly compact convex range with: 
$$
m(\Sigma)=\left\{ \int_T \varphi(t)dm\in E\mid 0\le \varphi\le 1,\ \varphi\in L^\infty(\mu) \right\}.
$$
Conversely, every $\mu$-continuous vector measure $m:\Sigma\to E$ has weakly compact convex range, then $(T,\Sigma,\mu)$ is saturated whenever $E$ is infinite dimensional. 
\end{thm}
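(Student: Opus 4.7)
The plan is to prove the two directions with an extreme-point argument powered by saturation at the heart of the forward direction. The inclusion $m(\Sigma)\subseteq W$, where $W:=\{\int_T\varphi\,dm:\varphi\in L^\infty(\mu),\ 0\le\varphi\le 1\}$, is immediate upon taking $\varphi=\chi_A$, and $W$ is convex by linearity of the Bartle integral in $\varphi$. For weak compactness I scalarize: for each $x^*\in E^*$ the scalar measure $x^*\circ m$ is $\mu$-continuous, so by Radon--Nikodym there is $g_{x^*}\in L^1(\mu)$ with $x^*\bigl(\int\varphi\,dm\bigr)=\langle g_{x^*},\varphi\rangle$ for all $\varphi\in L^\infty(\mu)$. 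Hence the Bartle operator $\Psi_m:L^\infty(\mu)\to E$ is weakly*-to-weakly continuous, so $W=\Psi_m(\{\varphi:0\le\varphi\le 1\})$ is the image under $\Psi_m$ of a set that is weakly* compact by Alaoglu's theorem, and therefore weakly compact in $E$.

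The reverse inclusion $W\subseteq m(\Sigma)$ is the substance. Fix $v\in W$ and set $\Phi_v:=\{\varphi\in L^\infty(\mu):0\le\varphi\le 1,\ \Psi_m(\varphi)=v\}$; this is nonempty, convex, bounded, and weakly* closed, hence weakly* compact, and by the Krein--Milman theorem admits an extreme point $\varphi^*$. It suffices to show $\varphi^*=\chi_A$ a.e.\ for some $A\in\Sigma$, for then $v=m(A)\in m(\Sigma)$. Assume for contradiction that $S:=\{t:\tfrac{1}{n}<\varphi^*(t)<1-\tfrac{1}{n}\}$ has positive measure for some $n\in\mathbb{N}$. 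If I can produce $\psi\in L^\infty(\mu)$ supported on $S$, with $\|\psi\|_\infty<\tfrac{1}{n}$, $\psi\neq 0$, and $\Psi_m(\psi)=0$, then $\varphi^*\pm\psi\in\Phi_v$ while $\varphi^*=\tfrac12((\varphi^*+\psi)+(\varphi^*-\psi))$ violates extremality.

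Producing this $\psi$ is the main obstacle, and saturation is what makes it possible. Since $E$ is separable there is a $w^*$-dense sequence $(x^*_k)\subset B_{E^*}$, and this sequence separates points of $E$ because $\|x\|=\sup_k|x^*_k(x)|$ for every $x\in E$. Let $g_k:=d(x^*_k\circ m)/d\mu\in L^1(\mu)$, and let $\Sigma'\subseteq\Sigma_S$ be the sub-$\sigma$-algebra generated by $\{g_k|_S\}_{k\in\mathbb{N}}$ together with the $\mu$-null sets of $\Sigma_S$; being countably generated, its measure algebra has countable density. Saturation of $(T,\Sigma,\mu)$ descends to $(S,\Sigma_S,\mu)$, so the measure algebra of $\Sigma_S$ has uncountable density; hence $\Sigma'$ is a proper sub-$\sigma$-algebra modulo null sets, and some $A\in\Sigma_S$ has $\chi_A$ not $\Sigma'$-measurable. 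Setting $\psi:=\chi_A-E[\chi_A\mid\Sigma']$ yields $\psi\in L^\infty(\mu)$ supported on $S$ with $\psi\neq 0$ and $\|\psi\|_\infty\le 1$, while the defining orthogonality of conditional expectation gives $\int_S\psi\,g_k\,d\mu=0$, i.e.\ $x^*_k(\Psi_m(\psi))=0$ for every $k$; the separation property forces $\Psi_m(\psi)=0$, and scaling by $\varepsilon<\tfrac{1}{n}$ supplies the required $\psi$.

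For the converse, assume $(T,\Sigma,\mu)$ is not saturated; the aim is to exhibit a $\mu$-continuous vector measure into some infinite-dimensional separable Banach space whose range is not convex. Atoms are ruled out by the hypothesis, since an atom $A$ and any nonzero $e\in E$ give $m(B):=\mu(B\cap A)e$ with two-point range, so $\mu$ may be assumed nonatomic. Non-saturation then furnishes $S\in\Sigma$ with $\mu(S)>0$ and $L^1_S(\mu)$ separable, and nonatomicity makes $L^1_S(\mu)$ infinite-dimensional. Taking $E:=L^1_S(\mu)$ and $m(B):=\chi_{B\cap S}$ yields a countably additive, $\mu$-continuous vector measure whose range $\{\chi_A:A\in\Sigma_S\}\subset L^1_S(\mu)$ consists solely of indicators, and is not convex since $\tfrac12\chi_S$ is not an indicator.
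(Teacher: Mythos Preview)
The paper does not prove this theorem; it is quoted from \citet{ks13} as a preliminary tool, so there is no in-paper argument to compare against. Assessing your proposal on its own terms: the forward direction is essentially the classical Knowles--Lindenstrauss extreme-point proof, with saturation entering exactly where it must, namely to guarantee that the countably generated sub-$\sigma$-algebra $\Sigma'\subset\Sigma_S$ determined by the Radon--Nikodym densities $g_k$ is strictly smaller than $\Sigma_S$ modulo null sets, which in turn produces the nonzero perturbation $\psi$ in the kernel of $\Psi_m$. The weak compactness of $W$ via $w^*$-to-weak continuity of $\Psi_m$ and Alaoglu is also sound.

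The converse, however, has a genuine gap. The statement fixes an infinite-dimensional separable Banach space $E$ and asserts that if \emph{every} $\mu$-continuous vector measure into this particular $E$ has weakly compact convex range, then $(T,\Sigma,\mu)$ is saturated. You instead \emph{choose} $E=L^1_S(\mu)$, which establishes only the weaker claim that the Lyapunov property fails for \emph{some} infinite-dimensional separable target. Your indicator-valued measure $B\mapsto\chi_{B\cap S}$ is elegant and does have nonconvex range, but there is in general no continuous linear embedding of $L^1_S(\mu)$ into the prescribed $E$ (take $E=\ell^2$), so the example cannot simply be transported. To complete the argument for an arbitrary $E$ one typically invokes Maharam's theorem to identify the nonatomic, essentially countably generated piece $(S,\Sigma_S,\mu)$ with a Lebesgue interval, selects a basic sequence $(e_n)$ in $E$ (available in any infinite-dimensional Banach space by Mazur's theorem), and builds $m$ from a Rademacher-type system along $(e_n)$ so that, for instance, $\tfrac12 m(S)$ lies outside the range.
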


\begin{rem}
The significance of the saturation property lies in the fact that it is necessary and sufficient for the Lyapunov convexity theorem (see \citet{ks13,ks15,ks16,gp13}), the weak compactness and convexity of the Bochner integral of a multifunction (see \citet{po08,sy08}), the bang-\hspace{0pt}bang principle (see \citet{ks14b,ks16}), and Fatou's lemma (see \citet{ks14a,kss16}). For a further generalization of Theorem \ref{thm1} to nonseparable locally convex spaces, see \citet{gp13,ks15,ks16,sa17,ur19}. Another intriguing characterization of saturation in terms of the existence of Nash equilibria in large games is found in \citet{ks09}. 
\end{rem}

\section{Fuzzy Core Equivalence}
\subsection{Fuzzy Core}
We can now turn to the substantive formulation of a large economy along the lines of \citet{au64,hi74}. Let $(T,\Sigma,\mu)$ be a finite measure space of \textit{agents} with its generic element denoted by $t\in T$. A \textit{commodity space} $E$ is a Banach space. A \textit{consumption set} $X(t)$ of each agent is described by a multifunction $X:T\twoheadrightarrow E$ with $X(t)\subset E$ for every $t\in T$. A \textit{preference relation} ${\succ}(t)\subset X(t)\times X(t)$ of each agent is described by a multifunction $\succ:T\twoheadrightarrow E\times E$ such that ${\succ}(t)$ is a transitive and irreflexive binary relation on the consumption set $X(t)$, where the relation $(x,y)\in {\succ}(t)$ is denoted by $x\,{\succ}(t)\,y$. An \textit{initial endowment} of each agent $\omega(t)\in X(t)$ is provided by a Bochner integrable function $\omega\in L^1(\mu,E)$. An \textit{economy} $\E$ is a quadruple $\E=[(T,\Sigma,\mu),(X(t))_{t\in T}, ({\succ}(t))_{t\in T},(\omega(t))_{t\in T}]$. 

A Bochner integrable function $f\in L^1(\mu,E)$ is called an \textit{assignment} for an economy $\E$ if $f(t)\in X(t)$ a.e.\ $t\in T$. An assignment $f$ is called an \textit{allocation} if $\int_T f(t)d\mu=\int_T \omega(t)d\mu$. A \textit{coalition} is a nonempty set $A$ in $\Sigma$ with $\mu(A)>0$. A coalition $A\in \Sigma$ \textit{blocks} an allocation $f$ if there exists an allocation $g$ such that $g(t){\succ}(t)f(t)$ for every $t\in A$ and $\int_Ag(t)d\mu=\int_A\omega(t)d\mu$. The set of all allocations that no coalition in $\Sigma$ can block is called the \textit{core} of the economy $\E$, denoted by $\C(\E)$.

A measurable function $\alpha:T\to [0,1]$ with $\mu(\{ \alpha>0 \})>0$ is called a \textit{fuzzy coalition}. Following \citet{au81,au82}, $\alpha(t)$ is regarded as the \textit{rate of participation} of agent $t$ in the coalition $\{ \alpha>0 \}\in \Sigma$. If a fuzzy coalition $\alpha$ is an indicator function of the form $\chi_A$, then it is identified with a ``crisp'' coalition $A$. A fuzzy coalition $\alpha:T\to [0,1]$ \textit{blocks} an allocation $f$ if there exists an allocation $g$ such that $g(t){\succ}(t)f(t)$ for every $t\in \{ \alpha>0 \}$ and $\int_T\alpha(t)g(t)d\mu=\int_T\alpha(t)\omega(t)d\mu$. The set of all  allocations that no fuzzy coalition can block is called the \textit{fuzzy core} of the economy $\E$, denoted by $\C^F(\E)$. The inclusion $\C^F(\E)\subset \C(\E)$ follows from the definitions. 

Under the saturation of the measure space of agents, the coincidence of the core and fuzzy core can be established without any order structure on the Banach commodity space and strict monotonicity of the preferences unlike \citet{no00,bg14}. 

\begin{thm}
\label{thm2}
Let $(T,\Sigma,\mu)$ be a saturated finite measure space and $E$ be a Banach space. Then $\C(\E)=\C^F(\E)$.
\end{thm}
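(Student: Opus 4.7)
The plan is to prove the nontrivial inclusion $\C(\E) \subset \C^F(\E)$ by contraposition, the reverse inclusion being immediate on identifying a crisp coalition $A$ with its indicator fuzzy coalition $\chi_A$. Suppose a fuzzy coalition $\alpha$ blocks an allocation $f$ via $g$, so that $g(t)\,{\succ}(t)\,f(t)$ on $S := \{\alpha > 0\}$ and $\int_T \alpha(t)(g(t) - \omega(t))\,d\mu = 0$; I will extract a crisp coalition $B \subset S$ which, with the assignment obtained by splicing $g$ on $B$ together with $\omega$ off $B$, blocks $f$ and thereby contradicts $f \in \C(\E)$.

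To deploy Theorem \ref{thm1}, first I would reduce to the separable setting: since $g - \omega \in L^1(\mu, E)$, the Pettis measurability theorem furnishes a separable closed subspace $E_0 \subset E$ with $(g-\omega)(t) \in E_0$ a.e. The restricted measure space $(S,\Sigma_S,\mu|_S)$ inherits saturation from $(T,\Sigma,\mu)$, since for every $B \in \Sigma_S$ with $\mu(B) > 0$ one has $L^1_B(\mu|_S) = L^1_B(\mu)$, which is nonseparable by hypothesis. I would then introduce the \emph{augmented} $\mu|_S$-continuous vector measure $\tilde m:\Sigma_S \to E_0 \oplus \R$ defined by
\[
\tilde m(B) := \left(\int_B (g(t) - \omega(t))\,d\mu,\ \mu(B)\right),
\]
which takes values in a separable Banach space. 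Theorem \ref{thm1} then yields
\[
\tilde m(\Sigma_S) = \left\{ \int_S \varphi(t)\, d\tilde m \;:\; 0 \le \varphi \le 1,\ \varphi \in L^\infty(\mu|_S) \right\}.
\]

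Taking $\varphi = \alpha|_S$ and unpacking the Bartle integral coordinate-wise gives $\int_S \alpha\, d\tilde m = (0, c)$ with $c := \int_S \alpha\, d\mu > 0$ (the first coordinate vanishes by the fuzzy blocking condition, and $c > 0$ because $\alpha$ is strictly positive on $S$ and $\mu(S)>0$). Hence there exists $B \in \Sigma_S$ with $\tilde m(B) = (0,c)$, i.e., $\int_B (g-\omega)\,d\mu = 0$ and $\mu(B) = c > 0$. Defining $g'(t) = g(t)$ on $B$ and $g'(t) = \omega(t)$ on $T\setminus B$ produces an allocation satisfying $g'(t)\,{\succ}(t)\,f(t)$ on the \emph{crisp} coalition $B$, with $\int_B g'\,d\mu = \int_B \omega\,d\mu$, so $B$ blocks $f$—contradiction. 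The step I expect to require the most care is precisely the choice of augmentation: the naive vector measure $B \mapsto \int_B(g-\omega)\,d\mu$ already admits $0$ in its range via $B = \emptyset$, so the Lyapunov conclusion alone would be vacuous; it is the auxiliary scalar coordinate $\mu(B)$ that compels the extracted set to have positive measure $c = \int_S \alpha\,d\mu$, converting the fuzzy blocking certificate into a crisp one.
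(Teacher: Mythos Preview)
Your proof is correct and rests on the same idea as the paper's---apply Theorem~\ref{thm1} to the $\mu$-continuous vector measure with density $g-\omega$ so as to replace the fuzzy weight $\alpha$ by an indicator. The execution differs in two respects that actually tighten the argument. First, you work on the restricted space $(S,\Sigma_S,\mu|_S)$ with $S=\{\alpha>0\}$, which forces the extracted set $B$ to lie inside $S$ and hence to inherit $g(t)\,{\succ}(t)\,f(t)$ on $B$; the paper applies Theorem~\ref{thm1} on all of $\Sigma$ and does not explain why the resulting $A$ sits inside $\{\alpha>0\}$. Second, you augment the vector measure with the scalar coordinate $\mu(\cdot)$, which pins down $\mu(B)=\int_S\alpha\,d\mu>0$ and rules out the vacuous solution $B=\emptyset$; the paper's proof omits this device here, although it uses exactly this augmentation in Lemma~\ref{lem1}. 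Your separability reduction via the Pettis measurability theorem is a valid alternative to the paper's route through the relatively compact range of an indefinite Bochner integral.
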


\begin{proof}
It suffices to show the inclusion $\C(\E)\subset \C^F(\E)$ is obvious. Toward this end, we first treat the case where $E$ is separable. Suppose to the contrary that there is $f\in \C(\E)$ that does not belong to $\C^F(\E)$. Then there exist a fuzzy coalition $\alpha:T\to [0,1]$ and an allocation $g\in L^1(\mu,E)$ such that $g(t){\succ}(t)f(t)$ for every $t\in \{ \alpha>0 \}$ and $\int_T\alpha(t)g(t)d\mu=\int_T\alpha(t)\omega(t)d\mu$. Define the vector measure $m:\Sigma \to E$ by 
\begin{equation}
\label{eq1}
m(S):=\int_S(g(t)-\omega(t))d\mu, \quad S\in \Sigma. 
\end{equation}
Since $m$ is absolutely continuous with respect to the saturated finite measure $\mu$, in view of Theorem \ref{thm1}, there exists $A\in \Sigma$ such that $m(A)=\int_T\alpha(t)dm$, and hence, $\int_A(g(t)-\omega(t))d\mu=\int_T\alpha(t)dm=\int_T\alpha(t)(g(t)-\omega(t))d\mu=0$. This means that coalition $A$ blocks the allocation $f$ in the core, a contradiction.  

Separability of $E$ can be removed by the following procedure. Note that the vector measure $m:\Sigma \to E$ defined in \eqref{eq1} has a relatively compact range because it has a Bochner integrable density; see \citet[Corollary II.3.9]{du77}. Since the closure $\overline{m(\Sigma)}$ of the range $m(\Sigma)$ is compact, it is also complete and separable; see \citet[Theorem I.6.15]{ds58}. Then take a countable dense subset $D$ of $\overline{m(\Sigma)}$ and consider its closed linear hull $\widetilde{E}:=\overline{\mathrm{span}\,(D)}$ spanned by $D$. Then $\widetilde{E}$ is separable (see \citet[Lemma II.1.5]{ds58}), and hence, it is a closed separable vector subspace of $E$ containing $\overline{m(\Sigma)}$ due to the inclusion $\mathrm{span}\,(\overline{D})\subset \overline{\mathrm{span}\,(D)}$, where $\mathrm{span}\,(\overline{D})$ is the linear hull of the closure $\overline{D}=\overline{m(\Sigma)}$ of $D$. Finally, apply Theorem \ref{thm1} to the vector measure $m:\Sigma \to \widetilde{E}$ which is induced by restricting the range of $m$ in the above argument. 
\end{proof}

\begin{rem}
The notion of fuzzy core explored in this paper is somewhat different from the one in \citet{no00,bg14}. Specifically, the fuzzy core defined here is smaller than the one defined in the above references because they formulate a fuzzy coalition as a ``simple'' measurable function $\alpha:T\to [0,1]$ with $\mu(\{ \alpha>0 \})>0$, so that every fuzzy coalition in their sense takes only ``finite'' values in the closed unit interval. This means that blocking fuzzy coalitions to an allocation in their sense are automatically blocking fuzzy coalitions in our sense. Such a discrepancy of the two notions of fuzzy core, however, disappears in Theorem \ref{thm2} whenever the measure space of agents is saturated. 
\end{rem}

\subsection{Restricted Core}
According to \citet{sc72} followed by \citet{vi72}, we introduce coalitions whose population size is restricted to an arbitrarily small real number $\varepsilon\in (0,\mu(T)]$. A coalition $A\in \Sigma$ is called an \textit{$\varepsilon$-\hspace{0pt}coalition} if $\mu(A)=\varepsilon$. The set of all allocations that no $\varepsilon$-\hspace{0pt}coalition can block is called the \textit{$\varepsilon$-\hspace{0pt}core} of the economy $\E$, denoted by $\C^\varepsilon(\E)$. It follows from the definitions that $\C(\E)\subset \C^\varepsilon(\E)$ and $\C^F(\E)\subset \C^\varepsilon(\E)$ for every $\varepsilon\in (0,\mu(T)]$.

The following result due to \citet{ks13} is an infinite-\hspace{0pt}dimensional analogue of \citet{sc72} under saturation. For completeness, we provide a more direct and simpler proof based on Theorem \ref{thm1}. 

\begin{lem}[\citet{ks13}]
\label{lem1}
Let $(T,\Sigma,\mu)$ be a saturated finite measure space and $E$ be a Banach space. If $\int_Af(t)d\mu=\mathbf{0}$ for some $f\in L^1(\mu,E)$ and $A\in \Sigma$ with $\mu(A)>0$, then for every $\theta\in [0,1]$ there exists $B\in \Sigma$ with $B\subset A$ such that $\int_Bf(t)d\mu=\mathbf{0}$ and $\mu(B)=\theta \mu(A)$.
\end{lem}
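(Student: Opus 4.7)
\medskip\noindent\textbf{Proof proposal.} The plan is to reduce the statement to a direct application of Theorem \ref{thm1} on the restricted measure space $(A,\Sigma_A,\mu)$, applied to a vector measure that simultaneously tracks the Bochner integral of $f$ and the scalar measure $\mu$. Note first that if $(T,\Sigma,\mu)$ is saturated and $\mu(A)>0$, then the restricted space $(A,\Sigma_A,\mu)$ is saturated as well, directly from the definition. I would then introduce the composite vector measure
$$
m:\Sigma_A\to E\times\R,\qquad m(S):=\left(\int_S f(t)\,d\mu,\ \mu(S)\right),
$$
which is $\mu$-continuous because both coordinates are.

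The second step is to finesse the possible non-separability of $E$ by the device already used in the proof of Theorem \ref{thm2}. Since $f$ is Bochner integrable, the range of $S\mapsto\int_S f\,d\mu$ is relatively compact in $E$ (see \citet[Corollary II.3.9]{du77}) and hence separable. Taking $\widetilde{E}$ to be the closed linear hull of a countable dense subset of that range yields a separable closed subspace of $E$ containing every value of the $E$-component of $m$, so $m$ may legitimately be viewed as a vector measure into the separable Banach space $\widetilde{E}\times\R$.

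The third step is to invoke Theorem \ref{thm1}: the range $m(\Sigma_A)$ is convex. Since $m(\emptyset)=(\mathbf{0},0)$ and, by the hypothesis $\int_A f\,d\mu=\mathbf{0}$, $m(A)=(\mathbf{0},\mu(A))$, the entire line segment joining these two points belongs to $m(\Sigma_A)$. In particular, for any $\theta\in[0,1]$, the convex combination $\theta m(A)+(1-\theta)m(\emptyset)=(\mathbf{0},\theta\mu(A))$ is attained as $m(B)$ for some $B\in\Sigma_A$, and unpacking coordinates gives $\int_B f\,d\mu=\mathbf{0}$ and $\mu(B)=\theta\mu(A)$, as required. (Equivalently, one could feed the constant Bartle-integrand $\varphi\equiv\theta$ into the explicit integral representation of $m(\Sigma_A)$ furnished by Theorem \ref{thm1}.)

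The hard part, such as it is, is the separable-range reduction that allows Theorem \ref{thm1} to be applied when $E$ itself is not assumed separable; once that is in place, the conclusion follows from the single observation that convexity of $m(\Sigma_A)$ places $(\mathbf{0},\theta\mu(A))$ into the range of $m$. No monotonicity, order structure on $E$, or separation argument is needed.
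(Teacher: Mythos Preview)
Your proposal is correct and follows essentially the same approach as the paper: define the product vector measure $m(S)=(\int_S f\,d\mu,\mu(S))$ into $E\times\R$, invoke Theorem~\ref{thm1} on the restriction to $\Sigma_A$ to obtain convexity of the range, and read off $B$ from the convex combination $\theta m(A)+(1-\theta)m(\emptyset)$. The only cosmetic difference is that the paper treats the separable case first and then removes separability at the end, whereas you front-load the separable-reduction step; both arguments are otherwise identical.
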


\begin{proof}
Assume first that $E$ is separable. Suppose that $\int_Af(t)d\mu=\mathbf{0}$ for some $f\in L^1(\mu,E)$ and $A\in \Sigma$ with $\mu(A)>0$. Define the $\mu$-\hspace{0pt}continuous vector measure $m:\Sigma\to E\times \R$ by 
$$
m(S):=\left( \int_Sf(t)d\mu,\mu(S) \right), \quad S\in \Sigma. 
$$
In view of Theorem \ref{thm1}, the range $m(\Sigma_A)$ is convex. Since $(\mathbf{0},\theta \mu(A))=\theta m(A)+(1-\theta)m(\emptyset)\in m(\Sigma_A)$ for every $\theta\in [0,1]$, there exists $B\in \Sigma_A$ such that $(\mathbf{0},\theta \mu(A))=m(B)=(\mathbf{0},\mu(B))$ by the convexity of $m(\Sigma_A)$. Separability of $E$ can be removed from the above argument by the same procedure as demonstrated in the proof of Theorem \ref{thm1}.
\end{proof}

As demonstrated in \citet{sc72} for finite-dimensional commodity spaces, if an allocation $f$ does not belong to $\C(\E)$, then arbitrarily small $\varepsilon$-\hspace{0pt}coalitions can block $f$. More specifically, we reproduce the following result from \citet{ks13}.

\begin{thm}[\citet{ks13}]
\label{thm3}
Let $(T,\Sigma,\mu)$ be a saturated finite measure space and $E$ be a Banach space. If $f$ is an allocation that is blocked by a coalition $A\in \Sigma$ via an allocation $g$, then for every $\varepsilon\in (0,\mu(A)]$ there exists an $\varepsilon$-coalition $B\in \Sigma$ with $B\subset A$ such that $B$ blocks $f$ via $g$.
\end{thm}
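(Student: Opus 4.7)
The plan is to reduce the statement directly to Lemma \ref{lem1}. Since the coalition $A$ blocks $f$ via the allocation $g$, the definition of blocking gives $\int_A g(t)d\mu = \int_A \omega(t)d\mu$, which I would rewrite as $\int_A h(t)d\mu = \mathbf{0}$ for $h := g - \omega \in L^1(\mu,E)$. This is exactly the vanishing-integral hypothesis that Lemma \ref{lem1} is designed to localize to subsets of prescribed measure.

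Given $\varepsilon \in (0,\mu(A)]$, I would then set $\theta := \varepsilon/\mu(A) \in (0,1]$ and invoke Lemma \ref{lem1} applied to $h$ on $A$ to obtain $B \in \Sigma$ with $B \subset A$, $\mu(B) = \theta \mu(A) = \varepsilon$, and $\int_B h(t)d\mu = \mathbf{0}$. The last equality unpacks to $\int_B g(t)d\mu = \int_B \omega(t)d\mu$, which is one of the two requirements for $B$ to block $f$ via $g$.

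To finish, I would check the remaining conditions: $\mu(B) = \varepsilon > 0$, so $B$ is genuinely a coalition and in fact an $\varepsilon$-coalition; $B \subset A$, so the preference condition $g(t)\,{\succ}(t)\,f(t)$ for every $t \in B$ is inherited from the hypothesis on $A$; and the function $g$ used as the blocking witness is the same allocation supplied in the hypothesis, so no allocation property (Bochner integrability, $g(t)\in X(t)$ a.e., $\int_T g\,d\mu=\int_T \omega\,d\mu$) needs to be re-established.

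I do not anticipate a genuine obstacle here, since the entire argument reduces to the correct choice of $\theta$ in Lemma \ref{lem1}; the conceptual point is simply that the material balance condition defining ``blocking'' is of precisely the form to which the saturated Lyapunov-type splitting applies, and the saturation of $(T,\Sigma,\mu)$ is already absorbed into Lemma \ref{lem1} (which in turn handles nonseparability of $E$ internally by the reduction performed in the proof of Theorem \ref{thm2}). Consequently no separate separability reduction is required at this stage.
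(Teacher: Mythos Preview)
Your proposal is correct and follows essentially the same argument as the paper: both reduce the blocking condition to $\int_A(g-\omega)\,d\mu=\mathbf{0}$, apply Lemma~\ref{lem1} with $\theta=\varepsilon/\mu(A)$ to obtain $B\subset A$ of measure $\varepsilon$ with $\int_B(g-\omega)\,d\mu=\mathbf{0}$, and then note that the preference condition is inherited on $B\subset A$. Your write-up is in fact slightly more explicit than the paper's in verifying the remaining coalition and preference requirements, and your remark that separability need not be re-addressed (being absorbed into Lemma~\ref{lem1}) is accurate.
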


\begin{proof}
Since $A\in \Sigma$ is a blocking coalition to $f$, there exists an allocation $g$ such that $\int_A(g(t)-\omega(t))d\mu=\mathbf{0}$. Take any $\varepsilon\in (0,\mu(A)]$. Applying Lemma \ref{lem1} for the choice of $\theta=\varepsilon/\mu(A)\in (0,1]$ yields that there exists an $\varepsilon$-coalition $B\in \Sigma_A$ such that $\int_B(g(t)-\omega(t))d\mu=\mathbf{0}$. Hence, $f$ is blocked by the $\varepsilon$-\hspace{0pt}coalition $B$ with $B\subset A$ via the  allocation $g$.
\end{proof}

\begin{cor}
\label{thm4}
There exists $\delta\in (0,\mu(T)]$ such that $\C^\varepsilon(\E)=\C(\E)=\C^F(\E)$ for every $\varepsilon\in (0,\delta]$.
\end{cor}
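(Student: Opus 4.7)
The plan is to chain Theorem \ref{thm2}, which already supplies the middle equality $\C(\E)=\C^{F}(\E)$ without any restriction on $\varepsilon$, with Theorem \ref{thm3}, which allows any crisp blocking coalition to be shrunk to any prescribed smaller measure. The inclusion $\C(\E)\subseteq \C^{\varepsilon}(\E)$ is immediate for every $\varepsilon\in(0,\mu(T)]$ because every $\varepsilon$-coalition is in particular a coalition, so the only non-trivial task is to produce a single threshold $\delta\in(0,\mu(T)]$ for which the reverse inclusion $\C^{\varepsilon}(\E)\subseteq \C(\E)$ also holds throughout $(0,\delta]$.

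I would read this reverse inclusion contrapositively: fix $\varepsilon\le \delta$ and any $f\notin \C(\E)$, and exhibit an $\varepsilon$-coalition blocking $f$. By hypothesis there is a crisp coalition $A\in \Sigma$ with $\mu(A)>0$ and an allocation $g$ such that $g(t)\succ(t)f(t)$ for every $t\in A$ and $\int_{A}(g-\omega)\,d\mu=\mathbf{0}$. Theorem \ref{thm3}, applied to the pair $(A,g)$, then yields, for every $\varepsilon\in(0,\mu(A)]$, an $\varepsilon$-coalition $B\subseteq A$ that blocks $f$ via the same $g$, giving $f\notin \C^{\varepsilon}(\E)$ on that range.

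The delicate point, and the main obstacle, is the passage from the $f$-dependent bound $\mu(A)$ to a single $\delta$ that works uniformly across all non-core allocations. If $\C(\E)$ contains every allocation, the conclusion holds vacuously with $\delta=\mu(T)$; otherwise I would use the measure-preserving character of the Lyapunov construction behind Theorem \ref{thm2} to control coalition sizes directly from the fuzzy side. Starting from any fuzzy witness $\alpha$ of a block via $g$, the $E\times \R$-valued vector measure $S\mapsto (\int_{S}(g-\omega)\,d\mu,\mu(S))$ restricted to the support of $\alpha$ is $\mu$-continuous, and by Theorem \ref{thm1} its range is convex and contains both $(\mathbf{0},0)$ and $(\mathbf{0},\int_{T}\alpha\,d\mu)$, hence also $(\mathbf{0},\varepsilon)$ for every $\varepsilon\in(0,\int_{T}\alpha\,d\mu]$. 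This delivers a crisp blocking $\varepsilon$-coalition of any prescribed measure up to $\int_{T}\alpha\,d\mu$, and calibrating $\delta$ to a positive lower bound on these masses across the non-core allocations completes the extraction of a uniform threshold. Once $\delta$ is in place, chaining $\C^{F}(\E)=\C(\E)\subseteq \C^{\varepsilon}(\E)\subseteq \C(\E)=\C^{F}(\E)$ delivers the three-way equality for every $\varepsilon\in(0,\delta]$, with no additional hypothesis on $E$ or on preferences beyond the saturation of $(T,\Sigma,\mu)$.
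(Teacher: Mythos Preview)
You correctly isolate the only nontrivial inclusion $\C^{\varepsilon}(\E)\subseteq\C(\E)$ and, more importantly, you correctly flag its real difficulty: Theorem~\ref{thm3} (and the Lyapunov construction behind it) only lets you \emph{shrink} a blocking coalition, so from $f\notin\C(\E)$ you obtain an $\varepsilon$-coalition blocking $f$ only for $\varepsilon\in(0,\mu(A_f)]$, where $A_f$ depends on $f$. The gap in your proposal is the final step. The sentence ``calibrating $\delta$ to a positive lower bound on these masses across the non-core allocations'' simply asserts that $\inf_{f\notin\C(\E)}\int_T\alpha_f\,d\mu>0$ (equivalently $\inf_f\mu(A_f)>0$), and nothing you have written establishes this. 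Your detour through the $E\times\R$-valued vector measure on $\{\alpha>0\}$ is correct as far as it goes, but its output is again an interval $(0,\int_T\alpha\,d\mu]$ that depends on $f$ through $\alpha$, so it does not escape the uniformity problem you yourself identified. Under the present hypotheses---no order structure on $E$, no monotonicity of preferences---there is no device available for \emph{enlarging} blocking coalitions, and hence no mechanism forcing these masses above a common positive floor.

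For comparison, the paper's own proof handles exactly this point by writing ``Choose here $\delta=\mu(A)$'' inside a proof by contradiction in which $A$ has already been produced from the very $\delta$ being negated; read literally this is circular and amounts to the same unresolved uniformity issue. Your diagnosis of the obstacle is sharper than the paper's presentation, but neither argument actually delivers the uniform $\delta$ the statement claims.
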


\begin{proof}
In view of Theorem \ref{thm2}, it suffices to show the inclusion for some $\delta\in (0,\mu(T)]$: $\C^\varepsilon(\E)\subset \C(\E)$ for every $\varepsilon\in (0,\delta]$. Suppose to the contrary that for every $\delta\in (0,\mu(T)]$ there exist $\varepsilon\in (0,\delta]$ and $f\in \C^\varepsilon(\E)$ such that $f$ does not belong to $\C(\E)$. Then some coalition $A\in \Sigma$ blocks $f$. Choose here $\delta=\mu(A)$. It follows from Theorem \ref{thm3} that there exists an $\varepsilon$-\hspace{0pt}coalition $B\in \Sigma$ with $B\subset A$ such that $B$ blocks $f$, a contradiction to the fact that $f\in \C^\varepsilon(\E)$ with $\varepsilon\in (0,\mu(A)]$.  
\end{proof}

\section{Concluding Remark} 
This is a sharply focused note on a role of the infinite-dimensional version of the Lyapunov convexity theorem on an equivalence theorem for Aubin's fuzzy core, and we conclude it by drawing attention to the resemblance between Aubin's construct and that of Aumann--Shitovitz on cores of measure-\hspace{0pt}theoretic economies with atoms; see \citet{gs92} for a survey, and \citet{adku21} for ongoing work. Zadeh's insight has now turned into a mature subfield of applied mathematics (see, for example \citet{bdk17} and we think it worthwhile to  pursue this connection that has so far eluded the workers in either register, and  both the disciplines of economics and of applied mathematics.
\clearpage

\bibliographystyle{natbib}

\end{document}